\documentclass[11pt]{article}
\usepackage{fullpage}
\usepackage{algorithmic}
\usepackage{epsfig}
\usepackage{epstopdf}
\usepackage{graphicx}
\usepackage{latexsym}
\usepackage{amsmath}
\usepackage{amsfonts}
\usepackage{amssymb}
\usepackage{mathrsfs}
\usepackage{pifont}
\usepackage{yhmath}
\usepackage{undertilde}
\usepackage{bbm}
\usepackage{eufrak}
\usepackage{amsthm}
\usepackage{booktabs}
\usepackage[usenames,dvipsnames]{color}
\usepackage{stmaryrd}
\usepackage{wasysym}
\usepackage{bbm}
\usepackage{array}
\usepackage[ruled]{algorithm2e}

\usepackage{bigstrut,multirow,rotating}

\usepackage{tabularx, environ}

\makeatletter

\newcolumntype{\expand}{}
\long\@namedef{NC@rewrite@\string\expand}{\expandafter\NC@find}

\NewEnviron{problem}[2][]{%
  \def\problem@arg{#1}%
  \def\problem@framed{framed}%
  \def\problem@lined{lined}%
  \def\problem@doublelined{doublelined}%
  \ifx\problem@arg\@empty%
    \def\problem@hline{}%
  \else%
    \ifx\problem@arg\problem@doublelined%
      \def\problem@hline{\hline\hline}%
    \else%
      \def\problem@hline{\hline}%
    \fi%
  \fi%
  \ifx\problem@arg\problem@framed%
    \def\problem@tablelayout{|>{\bfseries}lX|c}%
    \def\problem@title{\multicolumn{2}{|l|}{%
        \raisebox{-\fboxsep}{\textsc{\Large #2}}%
      }}%
  \else
    \def\problem@tablelayout{>{\bfseries}lXc}%
    \def\problem@title{\multicolumn{2}{l}{%
        \raisebox{-\fboxsep}{\textsc{\Large #2}}%
      }}%
  \fi%
  \bigskip\par\noindent%
  \begin{tabularx}{\textwidth}{\expand\problem@tablelayout}%
    \problem@hline%
    \problem@title\\[2\fboxsep]%
    \BODY\\\problem@hline%
  \end{tabularx}%
  \medskip\par%
}
\makeatother

\usepackage{bigstrut,multirow,rotating}

\bibliographystyle{plain}

\newtheorem{theorem}{Theorem}[section]
\newtheorem{lemma}[theorem]{Lemma}

\newtheorem{mechanism}[theorem]{Mechanism}

\theoremstyle{remark}
\newtheorem{remark}[theorem]{Remark}
\newtheorem{example}[theorem]{Example}

\makeatletter
\newcommand\figcaption{\def\@captype{figure}\caption}
\newcommand\tabcaption{\def\@captype{table}\caption}
\makeatother

\DeclareMathAlphabet{\mathpzc}{OT1}{pzc}{m}{it}

\begin{document}
\newcounter{my}
\newenvironment{mylabel}
{
\begin{list}{(\roman{my})}{
\setlength{\parsep}{-1mm}
\setlength{\labelwidth}{8mm}
\usecounter{my}}
}{\end{list}}

\newcounter{my2}
\newenvironment{mylabel2}
{
\begin{list}{(\alph{my2})}{
\setlength{\parsep}{-0mm} \setlength{\labelwidth}{8mm}
\setlength{\leftmargin}{3mm}
\usecounter{my2}}
}{\end{list}}

\newcounter{my3}
\newenvironment{mylabel3}
{
\begin{list}{(\alph{my3})}{
\setlength{\parsep}{-1mm}
\setlength{\labelwidth}{8mm}
\setlength{\leftmargin}{10mm}
\usecounter{my3}}
}{\end{list}}

\title{\bf Mechanism Design for Facility Location Games\\ with Candidate Locations}

\date{}
\maketitle
\vspace{-3em}
\begin{center}

\author{Zhongzheng Tang$^{1}$\quad Chenhao Wang$^{2}$\quad Mengqi Zhang$^{3,4}$\quad Yingchao Zhao$^{5}$\\
${}$\\
$^1$ School of Sciences, Beijing University of Posts and\\ Telecommunications, Beijing
100876, China\\
$^2$ Department of Computer Science and Engineering, University of\\ Nebraska-Lincoln, NE, United States\\
$^3$ Academy of Mathematics and Systems Science, Chinese Academy\\ of Sciences, Beijing 100190, China\\
$^4$ School of Mathematical Sciences, University of Chinese Academy\\ of Sciences, Beijing 100049, China\\
$^5$ Caritas Institute of Higher Education, HKSAR, China\\
\medskip
\{tangzhongzheng,wangch,mqzhang\}@amss.ac.cn, zhaoyingchao@gmail.com}

\end{center}
\vspace{1em}


\begin{abstract}
We study the facility location games with candidate locations from a mechanism design perspective. Suppose there are $n$ agents located {in} a metric space whose locations are their private information, and a group of candidate locations for building facilities. The authority plans to build some homogeneous facilities among these candidates to serve the agents, who {bears} 
a cost equal to the distance to the closest facility. The goal is to design mechanisms for minimizing the total/maximum cost among the agents. For the single-facility problem under the {maximum-cost} objective, we give a deterministic 3-approximation group strategy-proof mechanism, and prove that no deterministic (or randomized) strategy-proof {mechanism} can have an approximation ratio better than 3 (or 2). For the two-facility problem on a {line}, we give an anonymous deterministic group strategy-proof mechanism that is $(2n-3)$-approximation for the {total-cost objective}, and 3-approximation for the {maximum-cost objective}. We also provide (asymptotically) tight lower bounds on the approximation ratio.\\

\noindent{\bf Keywords:}~Facility location; Social choice; Mechanism design
\end{abstract}

\section{Introduction}

We consider a well-studied facility location problem of deciding where some public facilities should be built to {serve} a population of agents with their locations as private information. For example, a government needs to decide the locations of public supermarkets or hospitals. It is often modeled in a metric space or a network, where there are some agents (customers or citizens)
who may benefit by misreporting their locations. This manipulation can be problematic for a decision maker to find a system optimal solution, and {leads} to the mechanism design problem of providing (approximately) optimal solutions while also being strategy-proof (SP), i.e., no agent can be better off by misreporting their locations, regardless of what others report.

This setup, where the agents are located {in} a network that is represented as a contiguous graph, is initially studied by Schummer and Vohra \cite{schummer2002strategy}, and has many applications (e.g., traffic network). Alon \emph{et al.} \cite{alon2010strategyproof} give an example of telecommunications networks such as a local computer network or the Internet. In these cases, the agents are the network users or service providers, and the facility can be a filesharing server or a router. 
Interestingly, in computer networks, an agent's perceived network location can be easily manipulated, for example, by generating a false IP address or rerouting incoming and outgoing communication, etc. This explains the incentive of agents for misreporting, and thus a strategy-proof mechanism is necessary.

In the classic model \cite{procaccia2013approximate,lu2010asymptotically}, all points in the metric space or the network are feasible for building facilities. However, this is often impractical in many applications. For example, due to land use restrictions, the facilities can only be built in some feasible regions, while other lands are urban green space, residential buildings and office buildings, etc.  Therefore, we assume that there is a set of candidate locations, and study the facility location games with candidate locations in this paper.  


 We notice that our setting somewhat coincides a metric social choice problem \cite{feldman2016voting}, where the voters (agents) have their preferences over the candidates, and all participants are located {in} a metric space, represented as a point. The voters prefer candidates that are closer to them to the ones that are further away. The goal is to choose a candidate as a winner, such that the total distance to all voters is as small as possible. When the voters are required to report their locations, this problem is the same with the facility location game with candidate locations.

Our  setting  is sometimes referred to as the ``constrained facility location" problems \cite{sui2015approximately},
as the feasible locations for facilities are constrained. Sui and  Boutilier \cite{sui2015approximately} provide  possibility and
impossibility results with respect to (additive) approximate individual and group strategy-proofness, whereas do not consider the approximation ratios for system objectives.

\medskip\noindent\textbf{Our results.}

In this paper we study the problem of locating one or two facilities {in} 
a metric space, where there are $n$ agents and a set of feasible locations for building the facilities. For the {single-facility problem}, 
we consider the objective of minimizing the maximum cost among the agents, while the {social-cost} (i.e., the total cost of agents) objective has been well studied in \cite{feldman2016voting} as a voting process. We present a mechanism that deterministically selects the closest candidate location to {an} arbitrary dictator agent, and prove that it is group strategy-proof (GSP, no group of agents being better off by misreporting) and 3-approximation. In particular, when the space is a line, the mechanism that selects the closest candidate location to the leftmost agent is additionally anonymous, that is, the outcome is the same for all permutations of the agents' locations on the line. We {provide} 
a lower bound 3 for deterministic SP mechanisms, and 2 for randomized SP mechanisms; both lower bounds hold even on a line.

For the {two-facility problem} 
on a 
{line}, we present an anonymous GSP mechanism that deterministically selects two candidates closest to the leftmost and rightmost agents, respectively.  It is $(2n-3)$-approximation for the {social-cost} objective, and $3$-approximation for the {maximum-cost} objective. On the negative side, we prove that, for the {maximum-cost} objective, no deterministic (resp. randomized) strategy-proof mechanism can have an approximation ratio better than 3 (resp. 2).

Our results for deterministic mechanism on a line are summarized in Table \ref{tab:2} in bold font, where a ``$\star$" indicates that the upper bound holds for general metric spaces. All inapproximability results are obtained on a line, and thus hold for more general metric spaces.

\begin{table}[htbp]
  \centering
  \caption{Results for deterministic strategyproof mechanisms on a line.} \label{tab:2}
\smallskip
    \begin{tabular}{|p{2.3 cm}<{\centering}|p{3.718cm}<{\centering}|p{3.7cm}<{\centering}|}
    \hline
    \small Objective  &  \small {Social cost} &\small {Maximum cost} \bigstrut\\
    \hline
    \multirow{2}[2]{*}{\small {Single-facility}} 
    & \multirow{1}[1]{*}{\small ~~UB: $3^\star$ \cite{feldman2016voting} } & {\small ~~UB: $\mathbf{3^\star}$ } \bigstrut[t]\\
    & {\small LB: $3$ \cite{feldman2016voting} } & {\small LB: $\textbf{3}$} \bigstrut[b]\\
    \hline
    \multirow{2}[2]{*}{\small {Two-facility}} 
    & \multirow{1}[1]{*}{\small ~~~~UB: $\mathbf{2n-3}$ } & {\small ~UB: $\textbf{3}$  } \bigstrut[t]\\

                                    & {\small \hspace{1.7mm}~~~~LB: $n-2$ \cite{fotakis2014power} } & {\small \hspace{0.5mm}LB: $\textbf{3}$} \bigstrut[b]\\
    \hline
    \end{tabular}
\end{table}

\smallskip\noindent\textbf{Related work.}

A range of works on social choice study the constrained single-facility location games for the {social-cost} objective,  where agents can be placed anywhere, but only a subset of locations is valid for the facility. The \emph{random dictatorship} (RD) mechanism, which selects each candidate with probability equal to the fraction of agents who vote for it, obtains {an approximation ratio of} $3-2/n$, 
and this is tight for all strategyproof mechanisms \cite{feldman2016voting,meir2012algorithms}.  The upper bound holds for any metric {spaces}, whereas the lower
bound requires specific constructions on the $n$-dimensional binary cube. Anshelevich and Postl \cite{anshelevich2017randomized} show a smooth transition of the RD approximation ratio from $2-2/n$ to $3-2/n$ as the location of the facility becomes more constrained. Meir \cite{meir2018strategic} (Section 5.3) provides an overview of approximation results for the single-facility problem.

\vspace{-1.5mm}
\paragraph{Approximate mechanism design in the classic setting.} For the classic facility location games wherein the locations have no constraint, Procaccia and Tennenholtz \cite{procaccia2013approximate} first consider it from the perspective of approximate mechanism design. For single-facility location on a line, they give a ``median" mechanism that is GSP and optimal for minimizing the social cost. Under the {maximum-cost} objective, they provide a deterministic 2-approximation and a randomized 1.5 approximation GSP mechanisms; both bounds are best possible.  For two-facility location, they give a 2-approximation mechanism  that always places the facilities at the leftmost and the rightmost locations of agents. Fotakis and Tzamos \cite{fotakis2014power} characterize deterministic mechanisms  for the problem of locating two facilities on the line, and prove a lower bound of $n-2$. Randomized mechanisms are considered in \cite{lu2009tighter,lu2010asymptotically}.


\vspace{-1.5mm}
\paragraph{Characterizations.} Dokow \emph{et al.} \cite{dokow2012mechanism} study SP mechanisms for locating a facility {in} 
a discrete graph, where the agents are located on vertices of the graph, and the possible facility locations are exactly the vertices of the {graph}. They give a full characterization of SP mechanisms on lines and 
sufficiently large cycles. For continuous lines, the set of SP and onto  mechanisms has been characterized
as all generalized median voting schemes \cite{border1983straightforward,schummer2002strategy}. 
\vspace{-1.5mm}
\paragraph{Other settings.} There are many different settings for facility location games in recent years. 
Aziz \emph{et al.} \cite{Aziz2019TheCC} study the mechanism design problem where the public facility is capacity constrained, where the capacity constraints limit the number of agents who can benefit from the facility's services.
Chen \emph{et al.} \cite{chen2019truthful} study a dual-role game where each agent can allow a
facility to be opened at his place and he may strategically report his opening cost. By introducing payment, they characterize truthful mechanisms and provide approximate mechanisms. After that, Li \emph{et al.} \cite{libudgeted} study a model with payment under a budget constraint.
Kyropoulou \emph{et al.} \cite{kyropoulou2019mechanism} initiate the study of constrained heterogeneous facility location problems, wherein selfish agents can either like or dislike the facility and facilities can be located {in} 
a given feasible region of the Euclidean plane.  
Other works on heterogeneous facilities can be found in \cite{zou2015facility,serafino2015truthful,chen2018mechanism}.
Independently, Walsh \cite{walsh2020strategy} studies a similar but more general setting where facilities have a finite set of feasible subintervals, and different facilities can have different feasible regions (e.g., the left facility is in $[1,2]$ and the right facility is in $[3,4]$). He proves tight upper bound 3 for both social cost and maximum cost objectives. By contrast, for two or more facilities, no anonymous mechanism has bounded approximation ratio for either objective. Additionally, he studies the problems of maximizing utilitarian and egalitarian social welfare, by assuming each agent has a distance-discounted utility.

\section{Model}\label{model1}
Let $k$ be the number of facilities to {be built}. 
In an instance of facility location game with candidate locations, 
the agent set is $N=\{1,\ldots,n\}$, and each agent $i\in N$ has a private location $x_i\in S$ {in} 
a metric space $(S,d)$, where $d:S^2\rightarrow \mathbb R$ is the metric (distance function). We denote by $\mathbf x=(x_1,\ldots,x_n)$ the location profile of agents. {The set of $m$ candidate locations is $M\subseteq S$.} 
A deterministic mechanism $f$ takes the reported agents' location profile $\mathbf x$ as input, and {outputs} a facility location profile $\mathbf y=(y_1,\ldots,y_k)\in M^k$, that is, selecting $k$ candidates for building facilities. A randomized mechanism outputs a probability distribution over $M^k$. Given an outcome $\mathbf y$, the \emph{cost} of each agent $i\in N$ is the distance to the closest facility, i.e.,
$c_i(\mathbf y)=d(x_i,\mathbf y):=\min_{1\le j\le k}d(x_i,y_j).$

A mechanism $f$ is \emph{strategy-proof} (SP), if no agent $i\in N$ can decrease his cost by misreporting, regardless of the location profile $\mathbf x_{-i}$ of others, that is, for any $x_i'\in S$,
{$c_i(f(x_i,\mathbf x_{-i}))\le c_i(f(x_i',\mathbf x_{-i})).$}
Further, {$f$} 
is \emph{group strategy-proof} (GSP), if no coalition $G\subseteq N$ of agent can decrease the cost of every agent in $G$ by misreporting, regardless of the location profile $\mathbf x_{-G}$ of others, that is, for any $\mathbf x_G'$, there exists an agent $i\in G$ such that
{$c_i(f(\mathbf x_G,\mathbf x_{-G}))\le c_i(f(\mathbf x_G',\mathbf x_{-G})).$}
A mechanism $f$ is \emph{anonymous}, if for every profile $\mathbf x$ and every permutation of agents $\pi:N\rightarrow N$, it holds that
$f(x_1,\ldots,x_n)=f(x_{\pi(1)},\ldots,x_{\pi(n)})$.

Denote an instance by $I(\mathbf x,M)$ or simply $I$. We consider two objective functions, minimizing the social cost and {minimizing} the maximum cost.

\smallskip\noindent\textbf{Social cost.} Given {a} location profile $\mathbf x$, the social cost of solution $\mathbf y$ is the total distance to all agents, that is,
$$SC(\mathbf x,\mathbf y)=\sum_{i\in N}c_i(\mathbf y)=\sum_{i\in N}d(x_i,\mathbf y).$$
\smallskip\textbf{Maximum cost.} Given {a} location profile $\mathbf x$, the maximum cost of solution $\mathbf y$ is the maximum distance to all agents, that is,
$$MC(\mathbf x,\mathbf y)=\max_{i\in N}c_i(\mathbf y)=\max_{i\in N}d(x_i,\mathbf y).$$

When evaluating a mechanism's performance, we use the standard worst-case approximation notion. Formally, given an instance $I(\mathbf x,M)$, let $opt(\mathbf x)\in\arg\min_{\mathbf y\in M^k}C(\mathbf x,\mathbf y)$ be an optimal facility location profile, {and $OPT(\mathbf x)$ be the optimum value}. We say that a mechanism $f$ provides an $\alpha$-approximation if for every instance $I(\mathbf x,M)$,
$$C(\mathbf x,f(\mathbf x))\le \alpha\cdot C(\mathbf x,opt(\mathbf x))=\alpha\cdot OPT(\mathbf x),$$
where the objective function $C$ can be $SC$ or $MC$. The goal is to design deterministic or randomized strategy-proof mechanisms with small approximation ratios.

\section{{Single-facility location games}}
In this section, we study the single-facility location games, i.e., $k=1$.  Feldman \emph{et al.} \cite{feldman2016voting} thoroughly study this problem for the {social-cost} objective. When the space is a {line}, they prove tight bounds on the approximations: the Median mechanism that places the facility at the nearest candidate of the median agent is SP and $3$-approximation, and no deterministic SP mechanism can do better. They also propose a randomized SP and $2$-approximation mechanism (called the Spike mechanism) that selects the nearest candidate of each agent with specific probabilities,  and prove that this approximation ratio is {the} best possible for any randomized SP {mechanism}. When it is a general metric space, they show that random dictatorship has a best possible approximation ratio of $3-\frac2n$.

Hence, we only consider the objective of minimizing the maximum cost among the agents. We study the problem on a {line} and {in} a general metric space, respectively.

\subsection{Line space}

Suppose {that} the space is a {line}. Let $x_l$ (resp. $x_r$) be the location of the leftmost (resp. rightmost) agent with respect to location profile $\mathbf x$. Consider the following mechanism.

\begin{mechanism}\label{mech:singleL}
Given {a} location profile $\mathbf x$, select the candidate location which is closest to the leftmost agent, that is, select a candidate in {location} $\arg\min_{y\in M}|y-x_l|$, breaking ties in {any} 
deterministic way.
\end{mechanism}

\begin{theorem}\label{thm:max1}
For the single-facility problem on a line, Mechanism \ref{mech:singleL} is an anonymous GSP and 3-approximation mechanism, under the {maximum-cost} objective.
\end{theorem}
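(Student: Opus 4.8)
The plan is to establish the three asserted properties separately, with group strategy-proofness as the only substantial part. Throughout write $y=f(\mathbf x)$ for the selected candidate, so $y\in\arg\min_{z\in M}|z-x_l|$ is a nearest candidate to the leftmost location $x_l=\min_i x_i$, and recall that on the line the maximum cost of a single facility at $y$ is $MC(\mathbf x,y)=\max(|x_l-y|,|x_r-y|)$. \emph{Anonymity} is then immediate: Mechanism \ref{mech:singleL} reads the profile only through $x_l=\min_i x_i$ together with the fixed deterministic tie-break on $M$, and $\min_i x_i$ is invariant under permutations, so $f(x_1,\dots,x_n)=f(x_{\pi(1)},\dots,x_{\pi(n)})$ for every $\pi$.

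For the \emph{approximation ratio}, let $y^*$ be an optimal candidate and set $OPT=MC(\mathbf x,y^*)$. I would use two facts. First, every single facility incurs max cost at least half the agents' diameter, so $OPT\ge (x_r-x_l)/2$. Second, since $y$ is a closest candidate to $x_l$, we have $|x_l-y|\le|x_l-y^*|\le OPT$. The triangle inequality then gives $|x_r-y|\le (x_r-x_l)+|x_l-y|\le 2\,OPT+OPT=3\,OPT$, and combining with $|x_l-y|\le OPT$ yields $MC(\mathbf x,y)=\max(|x_l-y|,|x_r-y|)\le 3\,OPT$.

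The core is \emph{group strategy-proofness} (which implies SP). I would rest it on two observations: (i) the map sending a point $p$ to its nearest candidate in $M$ on the line is monotone non-decreasing, so reporting a smaller leftmost location can only move the facility weakly to the left; and (ii) the true leftmost agent already attains the globally smallest achievable cost $\min_{z\in M}|z-x_l|$, since every candidate is at distance at least this from $x_l$. Given a coalition $G$ with misreport $\mathbf x_G'$, I split into cases. If the true leftmost agent belongs to $G$, then by (ii) that agent cannot strictly gain, so the required $i\in G$ exists. Otherwise $x_l$ still appears in the reported profile; if the reported minimum is unchanged the outcome is identical and no agent's cost moves, while if some member $j\in G$ reports a new minimum $p'<x_l$, I would show $j$ himself does not gain.

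The main obstacle is precisely this last subcase. Here $j$'s true location satisfies $x_j\ge x_l>p'$, and by (i) the new facility $y'$ satisfies $y'\le y$. When $y\le x_j$ the facility only recedes from $x_j$, so $|x_j-y'|\ge|x_j-y|$ at once. The delicate situation is $y>x_j\ge x_l$: since $y$ is the nearest candidate to $x_l$, no candidate can lie in $[x_l,y)$, so $y'\le y$ forces either $y'=y$ (no change) or $y'<x_l$; in the latter case the inequality $|y-x_l|\le|x_l-y'|$ (again from $y$ being nearest to $x_l$) gives $|x_j-y'|-|x_j-y|\ge 2(x_j-x_l)\ge 0$. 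In every branch the exhibited agent's cost does not decrease, which establishes GSP and completes the proof.
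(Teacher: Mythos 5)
Your proof is correct, but it is not the proof the paper gives for this theorem, and the differences are worth noting. For the approximation ratio, the paper argues by a three-case analysis on the position of the chosen candidate $y$ relative to $x_l-L/2$ and $x_r$ (where $L=x_r-x_l$), bounding the ratio separately in each case; you instead use the purely metric argument $|x_l-y|\le|x_l-y^*|\le OPT$ together with $x_r-x_l\le 2\,OPT$ and the triangle inequality. Your argument is essentially the one the paper deploys later for Theorem \ref{thm:gsp} (the dictatorship Mechanism \ref{mech:singleM} in a general metric space), and the paper itself remarks that Theorem \ref{thm:gsp} subsumes the ratio of Theorem \ref{thm:max1}; so you obtain the line result as a special case of a cleaner argument that generalizes immediately beyond the line. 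For group strategy-proofness, both you and the paper rest on the same two pillars --- the leftmost agent already attains his minimum possible cost, and a coalition can change the outcome only by someone reporting left of $x_l$, which moves the facility weakly left --- but the paper stops at ``this cannot move the facility location to the right, and thus no agent in $G$ can benefit,'' leaving implicit why a leftward move cannot help an agent lying to the left of $y$. You close exactly this gap: since $y$ is the nearest candidate to $x_l$, no candidate lies in $[x_l,y)$, so any changed outcome satisfies $y'<x_l$ with $|y'-x_l|\ge|y-x_l|$, whence $|x_j-y'|-|x_j-y|\ge 2(x_j-x_l)\ge 0$ for every agent $j$ with $x_j \ge x_l$. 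That computation is precisely the step the paper's one-line claim needs, so on the strategy-proofness side your write-up is the more complete of the two.
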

\begin{proof}
Denoted by $f$ Mechanism \ref{mech:singleL}. It is clearly anonymous, because the outcome depends only on the agent locations, not on their identities. Namely, for any permutation of the agent locations on the line, the facility locations do not change. Let $y=f(\mathbf x)$ be the outcome of the mechanism, and define $L=|x_r-x_l|$. We discuss three cases with respect to the location of $y$.

{Case 1: $x_l-L/2\le y\le x_r$.} The maximum cost of $y$ is $MC(\mathbf x,y)=\max\{|y-x_l|,|y-x_r|\}\le 3L/2$, and the optimal maximum cost is at least $L/2$. So we have $\frac{MC(\mathbf x,y)}{OPT(\mathbf x)}\le 3$.

{Case 2: $y>x_r$.} It is easy to see that $y$ is an optimal solution with maximum cost $|y-x_l|$, because the closest candidate $y'$ to the left of $x_l$ induces a maximum cost of at least $|y-x_l|+L$, and there is no {candidate} between $y'$ and $y$.

{Case 3: $y<x_l-L/2$.} The maximum cost of $y$ is $MC(\mathbf x,y)=x_r-y$. The optimal candidate has a distance at least $x_l-y$ to $x_l$. So  we have
{$$\frac{MC(\mathbf x,y)}{OPT(\mathbf x)}\le \frac{x_r-y}{x_l-y}=\frac{x_l+L-y}{x_l-y}=1+\frac{L}{x_l-y}<1+\frac{L}{L/2}=3,
$$}
which establishes the proof for approximation ratio.

It remains to show the group strategy-proofness. For any group of agents $G$, we want to show at least one agent in $G$ cannot gain by misreporting. Clearly the agent located at $x_l$ has no incentive to join $G$, because he already attains the minimum possible cost. The only way for $G$ to influence the output of mechanism $f$ is someone reporting a location to the left of $x_l$. However, this cannot move the facility location to the right, and thus no {agent} in $G$ can benefit by misreporting.
\end{proof}

Let $\epsilon>0$ be a sufficiently small number. We prove lower bounds for the approximation ratio of (deterministic and randomized) SP mechanisms, matching the upper bound in Theorem \ref{thm:max1}.

\begin{theorem}\label{thm:lb}
For the single-facility problem on a line,  no deterministic (resp. randomized) SP {mechanism} can have an approximation ratio better than 3 (resp. 2), under  the {maximum-cost} objective.
\end{theorem}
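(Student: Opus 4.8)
The plan is to prove both lower bounds via the same kind of two-instance manipulation argument used in the (commented-out) social-cost lower bound, adapted to the maximum-cost objective. The idea is to construct a small instance on the line where any SP mechanism is forced, by its approximation guarantee, to make a specific choice on one profile, and then show that this choice lets some agent profitably deviate from a neighboring profile. Concretely, I would work with just two agents and two candidate locations placed symmetrically, so that the optimal choice flips between the two candidates as one agent moves, while the approximation ratio constraint pins down the mechanism's output.

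For the deterministic bound, first I would fix candidates $M=\{0,2\}$ and the profile $\mathbf x=(1-\epsilon,1+\epsilon)$, so both candidates give maximum cost $1+\epsilon$ and the instance is symmetric; without loss of generality assume $f(\mathbf x)=0$, giving agent~$2$ cost $1+\epsilon$. Then I would pass to the profile $\mathbf x'=(1-\epsilon,2)$: here candidate $2$ is optimal with maximum cost $1-\epsilon$ while candidate $0$ has maximum cost $2$, a ratio of $2/(1-\epsilon)$ which I must check exceeds $3$ for suitable $\epsilon$ (it does not with these exact numbers, so the coordinates need adjusting). The hard part will be calibrating the candidate positions and the target maximum-cost values so that the forced ratio on the second instance is strictly larger than $3$ under the maximum-cost (rather than social-cost) objective; the maximum cost behaves differently from the sum, so I would likely push the candidates farther apart or move the agents asymmetrically to widen the gap, aiming for a configuration where any ratio below $3$ forces $f(\mathbf x')$ to pick the far candidate, contradicting SP via agent~$2$'s deviation from $1+\epsilon$.

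For the randomized bound, I would reuse the same pair of instances but argue in expectation: on the symmetric profile $\mathbf x$, by symmetry or without loss of generality one candidate, say $0$, is chosen with probability at least $\tfrac12$, so agent~$2$'s expected cost is at least the average of the two distances. Then on $\mathbf x'$, the sub-$2$ approximation guarantee forces candidate $2$ to be selected with probability strictly greater than $\tfrac12$, which strictly lowers agent~$2$'s expected cost relative to reporting truthfully, again breaking SP. Here the main obstacle is ensuring that the threshold probability extracted from the ratio-$2$ bound is high enough to beat the expected cost computed on the first instance; this is a clean inequality chase once the coordinates are fixed, and I expect the factor-$2$ target to come out naturally from the two-point symmetric structure.

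Throughout, I would let $\epsilon>0$ be the small parameter already introduced before the theorem, take the bounds to hold for every such $\epsilon$, and note that since the entire construction lives on a line it immediately yields the same inapproximability in any metric space containing a line segment. The whole argument needs only $n=2$ agents and two candidates, so no additional machinery beyond the definition of (randomized) strategy-proofness and the maximum-cost objective is required.
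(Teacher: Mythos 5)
There is a genuine gap, and it sits exactly at the point you flag as ``the hard part.'' Your skeleton (symmetric two-agent profile, WLOG $f(\mathbf x)=0$ by symmetry, approximation guarantee forcing the choice on a second profile, SP contradiction) is precisely the paper's skeleton, but the second instance you actually write down, $\mathbf x'=(1-\epsilon,2)$ with $M=\{0,2\}$, does not work for either bound, and you do not supply the configuration that does. On $\mathbf x'=(1-\epsilon,2)$, candidate $2$ has maximum cost $1+\epsilon$ (not $1-\epsilon$ as you wrote; agent $1$ determines it) and candidate $0$ has maximum cost $2$, so the bad-to-optimal ratio is only $2/(1+\epsilon)<2$: a $(3-\delta)$-approximate deterministic mechanism is free to pick candidate $0$, and in the randomized case the constraint $P_2'+(1-P_2')\cdot\tfrac{2}{1+\epsilon}\le 2-\delta$ only yields $P_2'\gtrsim\delta$, far short of the $P_2'>\tfrac12$ you need. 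So both halves of your argument stall on the instances as stated.

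The missing idea is that the deviating agent must \emph{overshoot} the far candidate rather than report onto it: the paper takes $\mathbf x'=(1-\epsilon,3)$. Then candidate $2$'s maximum cost is still $1+\epsilon$ (driven by agent $1$, since agent $2$'s reported distance to candidate $2$ is only $1$), while candidate $0$'s maximum cost is $3$, so the ratio $3/(1+\epsilon)$ tends to $3$; this forces a $(3-\delta)$-approximate deterministic mechanism to choose candidate $2$, and in the randomized case gives $P_2'+(1-P_2')\cdot\tfrac{3}{1+\epsilon}\le 2-\delta$, hence $P_2'\ge\tfrac{1+\delta}{2}>\tfrac12$. The SP violation then follows because agent $2$'s \emph{true} cost at candidate $2$ is $1-\epsilon<1+\epsilon$ (deterministic), respectively its expected true cost drops strictly below $1$ (randomized). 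Note also that this calibration is essentially forced: with candidates $\{0,2\}$ and agent $1$ at $1-\epsilon$, the ratio $\max(1-\epsilon,x_2')/\max(1+\epsilon,|x_2'-2|)$ is maximized, in the limit, at value $3$ near $x_2'=3$, which is why the lower bound cannot be pushed past $3$ by this construction and why hitting $3$ requires exactly this overshoot. Your instinct to ``move the agents asymmetrically'' points in the right direction, but without exhibiting this instance the proof does not go through; moreover, the factor $2$ in the randomized bound does not ``come out of the two-point symmetric structure'' by itself --- it comes from the bad candidate's ratio being $3$, since probability $\tfrac12$ on ratio $3$ plus probability $\tfrac12$ on ratio $1$ averages to $2$.
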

\begin{proof}
Suppose $f$ is a deterministic strategy-proof mechanism with approximation ratio $3-\delta$ for some $\delta>0$. Consider an instance $I$ (as shown in {Fig.} 
\ref{fig:33}) with agents' location profile $\mathbf x=(1-\epsilon,1+\epsilon)$, and $M=\{0,2\}$. By symmetry, assume w.l.o.g. that $f(\mathbf x)=0$. The cost of agent 2 is $c_2(0)=|1+\epsilon-0|=1+\epsilon$. Now consider another instance $I'$ with agents' location profile $\mathbf x'=(1-\epsilon,3)$, and $M=\{0,2\}$. The optimal solution is candidate $2$, and the optimal maximum cost is $1+\epsilon$. The maximum cost induced by candidate $0$ is $3$. Since the approximation ratio of $f$ is $3-\delta$ and $\epsilon\rightarrow 0$, it must select $f(\mathbf x')=2$. It indicates that, under instance $I$, agent 2 located at $x_2=1+\epsilon$ can decrease his cost from $c_2(0)$ to $c_2(f(\mathbf x'))=|1+\epsilon-2|=1-\epsilon$, by misreporting his location as $x_2'=3$. This is a contradiction with  {strategy-proofness.}
\begin{figure}[htpb]
\begin{center}
\includegraphics[scale=0.32]{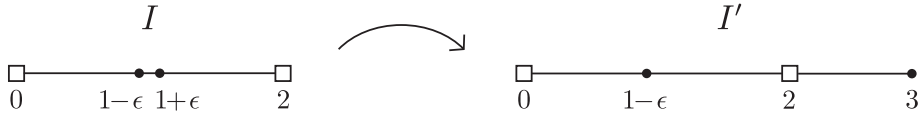}
\caption{\label{fig:33}  Two instances $I$ and $I'$, where hollow squares indicate candidates, and solid circles indicate agents. }
\end{center}
\end{figure}

Suppose $f$ is a randomized strategy-proof mechanism with approximation ratio $2-\delta$ for some $\delta>0$. Also consider instance $I$. W.l.o.g. assume that $f(\mathbf x)=0$ with probability at least $\frac12$. The cost of agent 2 is $c_2(f(\mathbf x))\ge \frac12(1+\epsilon)+\frac12(1-\epsilon)=1$. Then consider instance $I'$.  Let $P'_2$ be the probability of $f(\mathbf x')=2$.  Since the approximation ratio of $f$ is $2-\delta$, we have
$$ P'_2\cdot \frac{MC(\mathbf x',2)}{OPT(\mathbf x')}+(1-P'_2)\cdot\frac{MC(\mathbf x',0)}{OPT(\mathbf x')}=P'_2+(1-P'_2)\cdot\frac{3}{1+\epsilon}\le 2-\delta,$$
which implies that $P'_2>\frac12$ as $\epsilon\rightarrow 0$.
Hence, under instance $I$, agent 2 located at $x_2=1+\epsilon$ can decrease his cost to $c_2(f(\mathbf x'))< \frac12(1-\epsilon)+\frac12(1+\epsilon)=1\le c_2(f(\mathbf x))$, by misreporting his location as $x_2'=3$.
\end{proof}

\begin{remark}
For randomized mechanisms, we have shown {that} the lower bound is 2, and {we} are failed to find a matching upper bound. {We are concerned with \emph{weighted percentile voting} (WPV) mechanisms (see \cite{feldman2016voting}), which {locate} the facility on the $i$-th percentile agent's closest candidate with some probability $p_i$, where $p_i$ does not depend on the location profile $\mathbf x$. For example, Mechanism \ref{mech:singleL} is WPV by setting $p_0=1$ and $p_i=0$ for $i>0$.}  We remark that no WPV {mechanism} can beat the ratio of 3. Consider an instance with agents' location profile $(1,3)$ and candidates' location profile $(\epsilon,2,4-\epsilon)$. The optimal maximum cost is 1, attained by selecting candidate 2, while any WPV mechanism must select either candidate $\epsilon$ or {$4-\epsilon$}, 
inducing a maximum cost of $3-\epsilon$. The ratio approaches 3 when {$\epsilon$ tends to $0$}. It leaves an open question to narrow this gap.
\end{remark}

\subsection{General metric spaces}

In this subsection, we extend the model from a {line} to a general metric space. In this setting, the locations of  all agents and facility candidates are {in} a metric space $(S,d)$. Our objective is to  minimize the maximum cost of agents. We give the following dictatorial mechanism, in which the dictator can be an arbitrary agent.

\begin{mechanism}[{Dictatorship}]\label{mech:singleM}
Given {a} location profile $\mathbf x$, for an arbitrary agent $k\in N$, select the closest candidate location to agent $k$, that is, $\arg\min_{y\in M}d(x_k,y)$, breaking ties in {any} 
deterministic way.
\end{mechanism}

\begin{theorem} \label{thm:gsp}
For the single-facility problem {in} a metric space, Mechanism \ref{mech:singleM} is  GSP and 3-approximation, under the {maximum-cost} objective.
\end{theorem}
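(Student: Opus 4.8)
The plan is to handle the two claims separately, treating group strategy-proofness first since it follows almost immediately from the dictatorial structure, and then the approximation bound, which is where the substantive work lies.

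For the \textbf{group strategy-proofness}, write $f$ for Mechanism~\ref{mech:singleM} with dictator $k\in N$, and observe that the output $f(\mathbf x)$ depends only on the reported location of agent $k$. Given any coalition $G$ and any joint misreport $\mathbf x_G'$, I would split into two cases. If $k\notin G$, then the reported location of $k$ lies in $\mathbf x_{-G}$, so $f(\mathbf x_G',\mathbf x_{-G})=f(\mathbf x_G,\mathbf x_{-G})$ and every agent in $G$ keeps the same cost; the required inequality then holds for any $i\in G$. If $k\in G$, I claim $i=k$ is the witness: by definition the truthful outcome $y=f(\mathbf x)$ satisfies $d(x_k,y)=\min_{z\in M}d(x_k,z)$, and since the manipulated outcome $y'=f(\mathbf x_G',\mathbf x_{-G})$ is itself some candidate in $M$, we get $c_k(y)=d(x_k,y)\le d(x_k,y')=c_k(y')$. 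Hence agent $k$ cannot strictly benefit, and $f$ is GSP.

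For the \textbf{approximation ratio}, let $y=f(\mathbf x)$, let $y^*=opt(\mathbf x)$ be an optimal candidate, and set $OPT=OPT(\mathbf x)=\max_{i\in N}d(x_i,y^*)$. The key idea is that, unlike the line case of Theorem~\ref{thm:max1}, there is no linear order to exploit, so I would argue purely by chaining triangle inequalities through the two relay points $x_k$ and $y^*$. First, the mechanism's greedy choice gives
$$d(x_k,y)=\min_{z\in M}d(x_k,z)\le d(x_k,y^*)\le OPT,$$
the last step because $x_k$ is one of the agents whose distance to $y^*$ is counted in $OPT$. Second, for an arbitrary agent $i$,
$$d(x_i,x_k)\le d(x_i,y^*)+d(y^*,x_k)\le OPT+OPT=2\,OPT.$$
Combining these via one more triangle inequality,
$$d(x_i,y)\le d(x_i,x_k)+d(x_k,y)\le 2\,OPT+OPT=3\,OPT,$$
and taking the maximum over $i\in N$ yields $MC(\mathbf x,y)\le 3\,OPT$, as desired.

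The proof is short, and the only conceptual point requiring care is recognizing that the line-based case analysis of Theorem~\ref{thm:max1} must be discarded in favor of the metric triangle inequality. The ``main obstacle,'' to the extent there is one, is simply choosing $x_k$ and $y^*$ as the two intermediate relay points so that each of the three hops from $x_i$ to $y$ costs at most $OPT$. One should also sanity-check the degenerate cases --- ties broken arbitrarily, and the possibility $y=y^*$ --- but these do not affect any inequality above, so no separate treatment is needed.
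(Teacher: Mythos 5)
Your proof is correct and takes essentially the same approach as the paper: both bound $d(x_i,y)$ by $d(x_i,y^*)+2d(x_k,y^*)\le 3\,OPT$ via the triangle inequality and the greedy choice $d(x_k,y)\le d(x_k,y^*)$, differing only in the order the relay points are chained, and both derive GSP directly from the dictatorial structure (which you spell out in more detail than the paper's one-line remark).
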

\begin{proof}
Denote $f$ by Mechanism \ref{mech:singleM}. Let $y=f_k(\mathbf x)$ for a fixed $k$, and $y^*=opt(\mathbf x)$ be the optimal solution. 
Then we have $d(x_i,y^*) \leq OPT(\mathbf x)$ for each agent $i\in N$. As the distance function has the triangle inequality property in a metric space, we derive the following for each $i\in N$:
\begin{flalign}
\begin{split}
d(x_i, y) &\leq d(y^*, y) + d(x_i, y^*)\\
&\leq d(x_k, y) + d(x_k, y^*) + d(x_i, y^*)\\
&\leq 2d(x_k, y^*) + d(x_i, y^*)\\
&\leq 3~OPT,
\end{split}
\end{flalign}

The group strategy-proofness is trivial, because Mechanism \ref{mech:singleM} is dictatorial. 
\end{proof}

{One can find that, though losing the anonymity, Mechanism \ref{mech:singleM} is indeed a generalization of Mechanism \ref{mech:singleL}, and the approximation ratio in Theorem \ref{thm:gsp} implies that in Theorem \ref{thm:max1}.  }

Recall that random dictatorship locates the facility on agent $i$'s closest candidate with probability $1/n$ for all $i\in N$.
It has an approximation ratio of
$3-\frac2n$ for the {social-cost} objective {in} any metric space \cite{anshelevich2017randomized}. However, it does not help to improve the deterministic upper bound 3 in Theorem \ref{thm:gsp}, even if on the line.

\section{Two-facility location games}
{In this section, we consider the two-facility location games on a {line}, under both objectives of minimizing the social cost and {minimizing} the maximum cost. We give a linear approximation for the {social-cost} objective, which asymptotically tight, and a 3-approximation for the {maximum-cost} objective, which is best possible. }

\subsection{Social-cost objective} 

For the classic (unconstrained) facility location games in a continuous line under the social-cost objective, Fotakis and Tzamos \cite{fotakis2014power} prove that no deterministic mechanism has an approximation ratio less than $n-2$. Note that the lower bound $n-2$ also holds in our setting, because when all points on the line are candidates, our problem is equivalent to the classic problem. For the same setting in \cite{fotakis2014power}, Procaccia \emph{et al.} \cite{procaccia2009approximate} give a GSP $(n$-$2)$-approximation mechanism, which selects the two extreme agent locations.  We generalize this mechanism to our setting.

\begin{mechanism}\label{mech:twoL}
Given {a} location profile $\mathbf x$ on a {line}, select the candidate location which is closest to the leftmost agent, (i.e., $\arg\min_{y\in M}|y-x_l|$), {breaking ties in favor of the candidate to the right;} and select the one closest to the rightmost agent (i.e., $\arg\min_{y\in M}|y-x_r|$), {breaking ties in favor of the candidate to the left.}
\end{mechanism}

\begin{lemma}\label{lem:m4sp}
Mechanism \ref{mech:twoL} is GSP.
\end{lemma}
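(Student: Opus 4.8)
The plan is to prove group strategy-proofness of Mechanism~\ref{mech:twoL} by exploiting the near-independence of the two facility placements and the monotonicity of each placement rule. Mechanism~\ref{mech:twoL} selects $y_L \in \arg\min_{y\in M}|y-x_l|$ (breaking ties to the right) and $y_R\in\arg\min_{y\in M}|y-x_r|$ (breaking ties to the left), where $x_l$ and $x_r$ are the leftmost and rightmost reported locations. The key structural observation is that $y_L$ depends only on $x_l$ and $y_R$ depends only on $x_r$. So for any coalition $G$ to change the outcome, it must change $x_l$ (by having some member report strictly left of the current leftmost agent) or change $x_r$ (by having some member report strictly right of the current rightmost agent), or both.

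First I would fix a coalition $G\subseteq N$ and a joint misreport $\mathbf x_G'$, and argue that at least one member of $G$ does not strictly benefit. The natural candidates for the ``unhappy'' agent are the true leftmost and true rightmost agents of the coalition. I would argue as follows: consider the agent $i\in G$ whose true location $x_i$ is minimal among coalition members (and symmetrically the one with maximal location). If the coalition's manipulation moves $y_L$ or $y_R$, the only way to do so is to push the reported leftmost point further left or the reported rightmost point further right, since an interior misreport cannot affect either extreme. I would then show that moving $y_L$ leftward cannot decrease the cost of the leftmost true agent, because that agent was already assigned to (or at least as close to) the candidate nearest the original $x_l$; formally, the facility closest to $x_l$ is the candidate that minimizes its distance, so any candidate the manipulation could produce lies no closer to $x_l$. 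The symmetric statement handles $y_R$ and the rightmost true agent.

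The main obstacle I anticipate is handling the two facilities jointly: an agent's cost is $c_i(\mathbf y)=\min\{|x_i-y_L|,|x_i-y_R|\}$, so even if the manipulation worsens the left facility for a left-side agent, that agent might be closer to the right facility, and the right facility could simultaneously have been moved in a way that helps. I would deal with this by a case analysis on which of $x_l,x_r$ the coalition actually changes. If the coalition changes only $x_l$ (pushing it left), then $y_R$ is unchanged, and the true leftmost agent $i$ has $x_i \le$ all reported points; its nearest facility among $\{y_L,y_R\}$ cannot improve, since $y_R$ is fixed and $y_L$ can only move away from $x_i$. Symmetrically if only $x_r$ changes. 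If the coalition changes both, then the true leftmost agent is harmed (or unchanged) on the $y_L$ side while the true rightmost agent is harmed (or unchanged) on the $y_R$ side; I would then pick whichever of these two extreme agents lies on the side of the outcome that controls its cost, and show it cannot strictly gain, completing the argument that not all of $G$ strictly benefit.

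To make the monotonicity step precise, I would state and use a small claim: if $x_l' \le x_l$ is the new reported leftmost position, then the selected left candidate $y_L'=\arg\min_{y}|y-x_l'|$ satisfies $y_L'\le y_L$, and hence $|x_l - y_L'|\ge |x_l-y_L|$, so the original leftmost agent's distance to the left facility does not decrease; the tie-breaking rule (to the right for $y_L$) ensures this inequality is consistent and that ties do not let the coalition sneak a strictly closer facility. I expect the bookkeeping of tie-breaking directions to be the fiddliest part, but it is routine once the monotonicity claim and the ``pick the correct extreme agent'' principle are in place. This reduces the whole lemma to verifying that one distinguished agent among the true extremes of $G$ is never strictly better off.
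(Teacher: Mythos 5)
Your overall strategy (distinguished extreme agents who cannot improve, plus monotonicity of the closest-candidate selection under outward moves of the reported extremes) is in the same spirit as the paper's, but there is a genuine gap: the claim that for the coalition $G$ to change the outcome it ``must change $x_l$ (by having some member report strictly left of the current leftmost agent) or change $x_r$ (by having some member report strictly right of the current rightmost agent)'' is false. A coalition containing the \emph{true leftmost} agent can change the outcome by having that agent misreport \emph{inward}: the reported minimum then moves to the right, so $y_L$ can move right, possibly strictly closer to interior coalition members. Concretely, take agents at $0,5,10$ and candidates $M=\{0,5,10\}$; truthfully $y_L=0$, $y_R=10$, and agent $2$ (at $5$) has cost $5$. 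If $G=\{1,2\}$ and agent $1$ reports $5$, the mechanism outputs facilities at $5$ and $10$, and agent $2$'s cost drops to $0$. Your case analysis (``pushing it left'', ``pushing it right'', or both) never covers this manipulation; your monotonicity claim is stated only for $x_l'\le x_l$, so it says nothing here, and the assertion that ``$y_L$ can only move away from $x_i$'' is exactly what fails for interior members in this scenario.

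GSP still holds in that scenario, but for a different reason, and it is the one the paper leads with: the agent located at $x_l$ (resp.\ $x_r$) is handed his closest candidate, i.e., he already attains the minimum cost achievable under \emph{any} outcome of the mechanism, so he can never strictly improve regardless of what the coalition reports. Hence any coalition containing a true extreme agent is harmless; only for coalitions of interior agents is your monotonicity argument needed, and for those coalitions the reported extremes can indeed only be pushed outward (the non-members, including both true extremes, report truthfully), after which no facility gets closer to any point of $[x_l,x_r]$. Your write-up gestures at the ``minimum possible cost'' fact in passing (``any candidate the manipulation could produce lies no closer to $x_l$''), but it is never invoked to dispose of the inward-misreport case, so as written the proof does not go through; adding that case split first, as the paper does, closes the gap and essentially reproduces the paper's proof.
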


\begin{proof}
For any group $G$ of agents, we want to show at least one agent in $G$ cannot gain by misreporting. Clearly the agent located at $x_l$ or $x_r$ has no incentive to join the coalition $G$, because he already attains the minimum possible cost. The only way for $G$ to influence the output of the mechanism is some member reporting a location to the left of $x_l$ or the right of $x_r$. However, {this can move neither of the two facility locations closer to the members.} So no {agent} in $G$ can benefit by misreporting.
\end{proof}

\begin{theorem}\label{thm:ee}
For the two-facility problem on a line, Mechanism \ref{mech:twoL} is GSP, anonymous,  and $(2n-3)$-approximation under the {social-cost} objective.
\end{theorem}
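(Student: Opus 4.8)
The group strategy-proofness is exactly Lemma \ref{lem:m4sp}, and anonymity is immediate since the two selected candidates depend only on the (multiset of) agent locations through $x_l$ and $x_r$. So the plan is to establish only the $(2n-3)$ bound on the social-cost ratio. Write $\mathbf y=(y_l,y_r)$ for the two facilities output by the mechanism, and let $y_1^*\le y_2^*$ be the two candidates of a fixed optimal solution. Since $y_l$ is a closest candidate to $x_l$ and $y_r$ a closest candidate to $x_r$, I have the two anchoring inequalities $|x_l-y_l|\le |x_l-y_1^*|$ and $|x_r-y_r|\le |x_r-y_2^*|$. I would then partition the agents according to the optimal solution into the set $L^*$ served by $y_1^*$ and the set $R^*$ served by $y_2^*$, writing $OPT_L=\sum_{i\in L^*}d_i$ and $OPT_R=\sum_{i\in R^*}d_i$, where $d_i$ denotes the optimal cost of agent $i$. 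In the non-degenerate case the optimal clusters are contiguous intervals, so $x_l\in L^*$ and $x_r\in R^*$, which gives $d_l=|x_l-y_1^*|$ and $d_r=|x_r-y_2^*|$.

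The heart of the argument is a per-agent bound charging each agent to its own optimal cost plus a multiple of the cost of the extreme agent in its cluster. For $i\in L^*$ I would bound the mechanism cost by the distance to $y_l$ and chain the triangle inequality through $y_1^*$ and $x_l$:
\[
c_i(\mathbf y)\le |x_i-y_l|\le |x_i-y_1^*|+|y_1^*-x_l|+|x_l-y_l|\le d_i+2d_l,
\]
using $|x_l-y_l|\le |x_l-y_1^*|=d_l$. The symmetric estimate $c_i(\mathbf y)\le d_i+2d_r$ holds for $i\in R^*$. The key refinement that yields the exact constant is to treat the two extreme agents separately: $x_l$ pays only $c_l(\mathbf y)\le |x_l-y_l|\le d_l$, and likewise $c_r(\mathbf y)\le d_r$, rather than the weaker $3d_l,3d_r$ the generic bound would give.

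Summing over $L^*$, the extreme agent contributes at most $d_l$ and each of the remaining $|L^*|-1$ agents contributes at most $d_i+2d_l$, so $\sum_{i\in L^*}c_i(\mathbf y)\le OPT_L+2(|L^*|-1)d_l\le (2|L^*|-1)\,OPT_L$, where the last step uses $d_l\le OPT_L$. The same reasoning gives $\sum_{i\in R^*}c_i(\mathbf y)\le (2|R^*|-1)\,OPT_R$. Since both clusters are nonempty and $|L^*|+|R^*|=n$, neither cluster can contain more than $n-1$ agents, hence $2|L^*|-1\le 2n-3$ and $2|R^*|-1\le 2n-3$; adding the two cluster bounds yields $SC(\mathbf x,\mathbf y)\le (2n-3)(OPT_L+OPT_R)=(2n-3)\,OPT$.

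The step I expect to be most delicate is pinning down the constant: the naive summation only gives $2n-1$, and recovering $2n-3$ requires both the separate accounting for the two extreme agents (the ``$-1$'' in each cluster) and the observation that neither cluster can hold all $n$ agents. I would also dispatch the degenerate case $y_1^*=y_2^*$ (the optimum effectively using a single facility) separately, where the instance reduces to a single-facility problem and the bound is comfortably met. Finally, I would note that the tie-breaking rules, although essential for Lemma \ref{lem:m4sp}, are irrelevant here, since the argument uses only the anchoring inequalities $|x_l-y_l|\le|x_l-y|$ and $|x_r-y_r|\le|x_r-y|$ over all candidates $y$.
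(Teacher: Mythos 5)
Your main-case argument (both optimal clusters nonempty) is correct and is essentially the paper's proof: same partition of agents by their optimal facility, same anchoring inequalities, same triangle-inequality chain, same separate accounting for the two extreme agents; you finish with a per-cluster multiplicative bound $(2|L^*|-1)\,OPT_L+(2|R^*|-1)\,OPT_R$ while the paper combines additively via $OPT\ge d(x_l,y_1^*)+d(x_r,y_2^*)$, but these are interchangeable bookkeeping.

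The genuine gap is the degenerate case, which you dismiss as "comfortably met" and which is exactly where your closing claim about tie-breaking goes wrong. If one optimal cluster is empty (equivalently, after replacing $y_1^*$ by $y_2^*$, if $y_1^*=y_2^*$), your summation gives only $(2n-1)\,OPT$: with $|R^*|=n$ you get $OPT_R+2(n-1)d_r\le(2n-1)\,OPT$, and nothing forces $d_r\le\frac{n-2}{n-1}OPT$ (the optimal cost can be concentrated entirely on one extreme agent), so the bound $2n-3$ does not follow from your argument there. Your assertion "since both clusters are nonempty" is never justified, because it is simply not always true. The paper handles precisely this case using the directional tie-breaking you declare irrelevant: when $|N_1|=0$, optimality forces $y_2^*$ to be a closest candidate to \emph{every} agent, and the tie-breaking rules of Mechanism \ref{mech:twoL} (rightward for $x_l$, leftward for $x_r$) guarantee the mechanism actually selects $y_2^*$, achieving the optimum. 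So tie-breaking is load-bearing in the paper's approximation proof, not only in Lemma \ref{lem:m4sp}. The case can in fact be closed without tie-breaking, but it takes a real argument you did not supply: letting $d_l=|x_l-y_2^*|$, $d_r=|x_r-y_2^*|$, each non-extreme agent satisfies both $c_i(\mathbf y)\le d_i+2d_l$ and $c_i(\mathbf y)\le d_i+2d_r$, hence $c_i(\mathbf y)\le d_i+(d_l+d_r)$, which sums to $SC\le OPT+(n-2)(d_l+d_r)\le(n-1)\,OPT\le(2n-3)\,OPT$. Either supply this argument or invoke the tie-breaking as the paper does; as written, the proof is incomplete.
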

\begin{proof}
The group strategy-proofness is given in Lemma \ref{lem:m4sp}.
Let $\mathbf y^*=(y^*_1,y^*_2)$ be an optimal solution with $y^*_1\le y^*_2$, and $\mathbf y=(y_1,y_2)$ with $y_1\le y_2$ be the solution output by Mechanism \ref{mech:twoL}. Let $N_1=\{i\in N|d(x_i,y^*_1)\le d(x_i,y^*_2)\}$ be the set of agents who are closer to $y_1^*$ in the optimal solution, and $N_2=\{i\in N|d(x_i,y^*_1)> d(x_i,y^*_2)\}$ be the complement set. {Renaming if necessary, we assume $x_l=x_1\le\cdots \le x_n=x_r$.  } {If $|N_1|=0$, then $y_2^*$ must be the closest candidate to every agent in $N$, including $x_1$ (i.e., $y_2^*\in \arg\min_{y\in M}d(x_1,y)$). By the specific way of tie-breaking, Mechanism \ref{mech:twoL} must select $y_2^*$, and achieve the optimality. The symmetric analysis holds for the case when $|N_2|=0$. }

{So we only need to consider the case when $|N_1|\geq 1$ and $|N_2|\geq 1$. Clearly, $1\in N_1$ and $n\in N_2$.
} The social cost of the outcome $\mathbf y$ by Mechanism \ref{mech:twoL} is
\begin{flalign*}
&\sum_{i\in N} \min\{d(x_i,y_1),d(x_i,y_2)\}\leq \sum_{i\in N_1}d(x_i,y_1)+\sum_{i\in N_2}d(x_i,y_2)\\
\leq &\sum_{i\in N_1\backslash \{1\}}[~d(x_i,y^*_1)+d(y^*_1,y_1)~]+\sum_{i\in N_2\backslash \{n\}}[~ d(x_i,y_2^*)+d(y^*_2,y_2)~]\\
& + d(x_l,y^*_1) + d(x_r,y^*_2)\\
=& ~OPT+ \sum_{i\in N_1\backslash \{1\}}d(y^*_1,y_1)+\sum_{i\in N_2\backslash \{n\}}d(y^*_2,y_2)\\
\leq & ~OPT+ \sum_{i\in N_1\backslash \{1\}}[~d(x_l,y^*_1)+d(x_l,y_1)~]+\sum_{i\in N_2\backslash \{n\}}[~ d(x_r,y_2^*)+d(x_r,y_2)~]\\
\leq & ~OPT+ 2(|N_1|-1)\cdot d(x_l,y^*_1) + 2(|N_2|-1)\cdot d(x_r,y^*_2)\\
\le & ~OPT+2(\max\{|N_1|,|N_2|\}-1)\cdot OPT\\
\leq & ~(2n-3)\cdot OPT,
\end{flalign*}
{where the second last inequality {holds} because  $OPT\ge d(x_l,y^*_1)+d(x_r,y^*_2)$, and the last inequality {holds} because $\max\{|N_1|,|N_2|\}\le n-1$. }
\end{proof}

{Next} we give an example to show that the analysis in Theorem \ref{thm:ee} for the approximation ratio of Mechanism \ref{mech:twoL} is tight.

\begin{example}
Consider an instance on a line with agents' location profile $\mathbf x=(1,\frac43,\ldots,\frac43,2)$ and candidates' location profile $(\frac23+\epsilon,\frac43,2)$. The optimal social cost is $\frac13$, attained by solution $(\frac43,2)$. Mechanism \ref{mech:twoL} outputs solution $(\frac23+\epsilon,2)$, and the social cost is $(\frac43-\frac23-\epsilon)\cdot (n-2)+\frac13-\epsilon$. Then we have $\frac{(2/3-\epsilon)\cdot (n-2)+1/3-\epsilon}{1/3}\rightarrow 2n-3$, when $\epsilon$ tends to 0.
\end{example}

\subsection{{Maximum-cost objective}}

Next, we turn to consider the {maximum-cost} objective.

\begin{theorem}\label{thm:m4}
For the two-facility problem on a line, Mechanism \ref{mech:twoL} is GSP, anonymous, and 3-approximation under the {maximum-cost} objective.
\end{theorem}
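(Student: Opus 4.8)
The plan is to dispose of group strategy-proofness and anonymity immediately and then put all the effort into the $3$-approximation bound. Group strategy-proofness is exactly Lemma \ref{lem:m4sp}, and anonymity is immediate: the two selected candidates depend only on the locations $x_l$ and $x_r$ of the leftmost and rightmost agents, and these are invariant under any permutation of the agents. So the real content is the approximation guarantee under the maximum-cost objective.

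For the bound, let $\mathbf y=(y_1,y_2)$ with $y_1\le y_2$ be the output of Mechanism \ref{mech:twoL}, and let $\mathbf y^*=(y_1^*,y_2^*)$ with $y_1^*\le y_2^*$ be an optimal solution of value $OPT=MC(\mathbf x,\mathbf y^*)$. The first step is to record two ``anchoring'' inequalities that follow directly from the definition of the mechanism. Since $y_1$ is a closest candidate to $x_l$ among all of $M$, and both $y_1^*,y_2^*\in M$, we get $|x_l-y_1|\le\min\{|x_l-y_1^*|,|x_l-y_2^*|\}\le OPT$, where the last inequality holds because the leftmost agent's cost under $\mathbf y^*$ is at most $OPT$; symmetrically $|x_r-y_2|\le OPT$. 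Thus the mechanism's left (resp.\ right) facility lies within $OPT$ of the leftmost (resp.\ rightmost) agent.

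The key step is then to control the cost of every agent, not merely $x_l$ and $x_r$. Here I would exploit the optimal-assignment structure on the line: assigning each agent to its nearer optimal facility splits the sorted agents into a prefix $P$ served by $y_1^*$ and a suffix $Q$ served by $y_2^*$ (split at the midpoint $(y_1^*+y_2^*)/2$, allowing the degenerate cases $P=\emptyset$ or $Q=\emptyset$). Every $i\in P$ satisfies $|x_i-y_1^*|\le OPT$, and when $P\neq\emptyset$ the leftmost agent lies in $P$, so $|x_l-y_1^*|\le OPT$ as well; by the triangle inequality the prefix then has span at most $2\,OPT$, i.e.\ $x_i-x_l\le 2\,OPT$ for all $i\in P$. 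Combining this with the anchoring inequality gives, for each $i\in P$, $|x_i-y_1|\le (x_i-x_l)+|x_l-y_1|\le 2\,OPT+OPT=3\,OPT$. The symmetric argument with $Q$, $x_r$, and $y_2$ yields $|x_i-y_2|\le 3\,OPT$ for each $i\in Q$. Since $c_i(\mathbf y)=\min\{|x_i-y_1|,|x_i-y_2|\}$ and every agent lies in $P$ or $Q$, we conclude $MC(\mathbf x,\mathbf y)\le 3\,OPT$, as required.

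I expect the main obstacle to be exactly what forces the use of the optimal partition above. The tempting one-line estimate $c_i(\mathbf y)\le\min\{x_i-x_l,\,x_r-x_i\}+OPT\le (x_r-x_l)/2+OPT$ is worthless when the span $x_r-x_l$ is large, which happens precisely for two far-apart clusters with small $OPT$; the argument must instead use that each optimal facility covers its own cluster within radius $OPT$, which pins each cluster's span to at most $2\,OPT$. A secondary point to handle carefully is the degenerate case in which one of $P,Q$ is empty, meaning a single optimal facility already serves all agents (including both $x_l$ and $x_r$); then the whole instance has span at most $2\,OPT$ and a single anchored facility alone achieves cost at most $3\,OPT$. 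I would fold this case into the prefix/suffix argument rather than treat it separately, since the same span bound and anchoring inequality apply verbatim.
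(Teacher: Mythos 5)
Your proof is correct and follows essentially the same route as the paper: both partition the agents into the prefix/suffix clusters induced by the optimal solution's assignment and bound each cluster's cost by anchoring the mechanism's facility at the corresponding extreme agent ($x_l$ or $x_r$). The only difference is presentational --- where the paper invokes its single-facility result (Theorem \ref{thm:max1}) as a black box on each restricted instance, you inline that bound with an explicit triangle-inequality argument (anchoring plus the $2\,OPT$ cluster-span estimate), which proves the same thing.
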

\begin{proof}
The group strategy-proofness is given in Lemma \ref{lem:m4sp}. For any location profile $\mathbf x$, let $\mathbf y^*=(y^*_1,y^*_2)$ be an optimal solution with $y^*_1\le y^*_2$, and $\mathbf y=(y_1,y_2)$ with $y_1\le y_2$ be the solution output by Mechanism \ref{mech:twoL}.  Assume w.l.o.g. that $x_1\le\cdots\le x_n$. Let $N_1=\{i\in N|d(x_i,y^*_1)\le d(x_i,y^*_2)\}$ be the set of agents who are closer to $y_1^*$ in the optimal solution, and $N_2=\{i\in N|d(x_i,y^*_1)> d(x_i,y^*_2)\}$ be the complement set. {Let $n_1=|N_1|$ and $n_2=|N_2|$.} Define $C_1=\max_{i\in N_1}d(x_i,y^*_1)$ and $C_2= \max_{i\in N_2}d(x_i,y^*_2)$. It is easy to see that the optimal maximum cost is $\max\{C_1,C_2\}$.

 Next, we consider {a} restricted instance $(x_1,\ldots,x_{n_1})$ of the single-facility location problem. By the definition of Mechanism \ref{mech:twoL}, candidate $y_1$ is the closest one to agent 1. By Theorem \ref{thm:max1}, we have $\max_{i\in N_1}d(y_1,x_i)\le 3C_1$. Similarly, consider another restricted instance $(x_{n_1+1},\ldots,x_{n})$, we have $\max_{i\in N_2}d(y_2,x_i)\le 3C_2$. Therefore,
 \[\max_{i\in N}d(x_i,\mathbf y)\le 3\max\{C_1,C_2\},\]
 which completes the proof.

\end{proof}

In the following we give a lower bound 2 for randomized SP mechanisms, and a lower bound 3 for deterministic SP mechanisms, matching the bound in Theorem \ref{thm:m4}. We use the same construction as in the proof of Theorem \ref{thm:lb} for $2$ agents, and locate an additional
agent at a very far away point in all the location profiles used in the proof.

\begin{theorem}\label{thm:6}
For the two-facility problem on a line, no deterministic (resp. randomized) SP {mechanism} can have an approximation ratio better
than 3 (resp. 2), under the {maximum-cost} objective.
\end{theorem}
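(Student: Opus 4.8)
The plan is to reduce the two-facility lower bound to the single-facility lower bound already established in Theorem~\ref{thm:lb}, by ``wasting'' one facility on a distant agent. Following the hint, I would reuse the two close agents and the candidate set $\{0,2\}$ from the proof of Theorem~\ref{thm:lb}, and in every profile add a third agent together with a co-located candidate at a faraway point $F$, with $F$ large compared with $1/\epsilon$ (the extra candidate is needed so that the optimum can serve the far agent and stay bounded). Concretely I would work with three instances sharing $M=\{0,2,F\}$: the base instance $I$ with close agents at $(1-\epsilon,1+\epsilon)$, the instance $I'$ with close agents at $(1-\epsilon,3)$, and the mirror instance $I''$ with close agents at $(-1,1+\epsilon)$; the third agent sits at $F$ in all three. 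A direct check shows that in each instance the optimum places one facility at $F$ (serving the far agent at cost $0$) and the other among $\{0,2\}$, so that $OPT=1+\epsilon$ and the maximum cost is dictated solely by the two close agents. In other words, these are exactly the single-facility instances of Theorem~\ref{thm:lb} equipped with one extra, inert facility.

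Next I would argue that any mechanism with bounded ratio is forced into this same two-slot structure. For the deterministic part, if no facility is placed at $F$ the far agent pays at least $F-2$, and if the two facilities coincide (at $F$, at $0$, or at $2$) some agent again pays an unbounded cost; hence a mechanism with ratio $3-\delta$ must output one facility at $F$ and the other at $0$ or at $2$. If the non-$F$ facility of $f(I)$ sits at $0$, then the agent at $1+\epsilon$ pays $1+\epsilon$, and reporting $3$ (turning $I$ into $I'$, which forces the non-$F$ facility to $2$) drops its true cost to $1-\epsilon$; if instead it sits at $2$, the symmetric deviation of the agent at $1-\epsilon$ to $-1$ (turning $I$ into $I''$, which forces the non-$F$ facility to $0$) does the job. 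Either way strategy-proofness is violated, which is precisely the transcription of Theorem~\ref{thm:lb}, with a two-case split replacing the left--right symmetry used there.

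For the randomized part the argument is the same in spirit, but I would first quantify the ``forcing'': letting $\beta$ be the probability that $f(I)$ leaves the far agent unserved (no facility at $F$, or both facilities collapsed), the contribution $\beta\cdot(F-2)$ to the expected maximum cost must be at most $(2-\delta)(1+\epsilon)$, so $\beta\to 0$ as $F\to\infty$. Up to this vanishing probability the outcome is a distribution on $\{\{F,0\},\{F,2\}\}$, i.e.\ a single-facility randomized rule on the close pair, and I can rerun the computation of Theorem~\ref{thm:lb}: whichever of the two candidates carries probability at least $\tfrac12$ determines which close agent has cost at least $1$ in $I$, while the ratio bound on $I'$ (resp.\ $I''$) pushes the other candidate's probability above $\tfrac12$ there, lowering that agent's true cost below $1$. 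I expect the main obstacle to be technical rather than conceptual: keeping the randomized estimates clean in the presence of the error term $\beta$ and the broken symmetry, which I would resolve by taking $F$ large and $\epsilon$ small so that every gap being exploited is bounded below by a fixed positive multiple of $\delta$. Once the forcing observation is in place, the deterministic case is an immediate transcription.
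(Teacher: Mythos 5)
Your proposal is correct and takes essentially the same route as the paper: add a distant agent with a co-located candidate to force one facility there, then replay the single-facility lower-bound argument of Theorem~\ref{thm:lb} on the remaining facility slot, for both the deterministic and randomized cases. Your explicit mirror instance $I''$ and the $\beta$-quantification of how the far candidate is forced are simply more careful renderings of the paper's ``w.l.o.g.\ $0\in f(\mathbf x)$'' and ``candidate $L$ selected with probability 1'' shortcuts, not a different argument.
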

\begin{proof}
Suppose $f$ is a deterministic SP mechanism with approximation ratio $3-\delta$ for some $\delta>0$. Consider an instance $I$ (as shown in Figure \ref{fig:3}) with agents' location profile $\mathbf x=(1-\epsilon,1+\epsilon,L)$, and $M=\{0,2,L\}$, where $L$ is sufficiently large and $\epsilon>0$ is sufficiently small. Note that candidate $L$ must be selected (to serve agent 3) by any {mechanism that has a} good approximation {ratio}. 
We can assume w.l.o.g. that $0\in f(\mathbf x)$. The cost of agent 2 is $c_2(f(\mathbf x))=|1+\epsilon-0|=1+\epsilon$. Now consider another instance $I'$ with agents' location profile $\mathbf x'=(1-\epsilon,3,L)$, and $M=\{0,2,L\}$.  The optimal maximum cost is $1+\epsilon$, attained by selecting candidates $2$ and $L$. The maximum cost induced by any solution that selects candidate $0$ is at least $3$. Since the approximation ratio of $f$ is $3-\delta$ and $\epsilon\rightarrow 0$, it must select candidate 2, i.e., $2\in f(\mathbf x')$. It indicates that, under instance $I$, agent 2 located at $1+\epsilon$ can decrease his cost from $c_2(f(\mathbf x))=1+\epsilon$ to $c_2(f(\mathbf x'))=|1+\epsilon-2|=1-\epsilon$, by misreporting his location as $x_2'=3$. This is a contradiction with the strategy-proofness.
\begin{figure}[htpb]
\begin{center}
\includegraphics[scale=0.265]{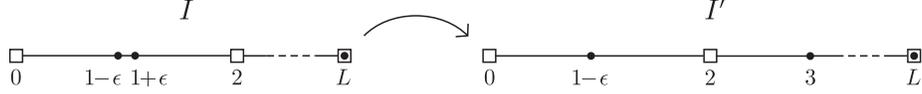}
\caption{\label{fig:3} There is an agent and a candidate in a very far away location $L$.}
\end{center}
\end{figure}

Suppose $f$ is a randomized SP mechanism with approximation ratio $2-\delta$ for some $\delta>0$. Also consider instance $I$. Note that candidate $L$ must be selected with probability 1 by any mechanisms that have {a good approximation ratio, since $L$ tends to $\infty$}. 
We can assume w.l.o.g. that $0\in f(\mathbf x)$ with probability at least $\frac12$. The cost of agent 2 is $c_2(f(\mathbf x))\ge \frac12(1+\epsilon)+\frac12(1-\epsilon)=1$. Then consider instance $I'$. The optimal maximum cost is $1+\epsilon$, which is attained by selecting candidates 2 and $L$. Let $P'_2$ be the probability of $f$ selecting candidate 2. The maximum cost induced by any solution that selects candidate $0$ is at least $3$.  Since the approximation ratio of $f$ is $2-\delta$, we have
 $$P'_2\cdot 1+(1-P'_2)\cdot\frac{3}{1+\epsilon}\le2-\delta,$$
 which implies that $P'_2>\frac12$ as {$\epsilon\rightarrow0$}. Hence, under instance $I$, agent 2 located at $1+\epsilon$ can decrease his cost to $c_2(f(\mathbf x'))< \frac12(1-\epsilon)+\frac12(1+\epsilon)\le c_2(f(\mathbf x))$, by misreporting his location as $x_2'=3$.
\end{proof}

\section{Conclusion}

For the classic $k$-facility location games, Fotakis and Tzamos \cite{fotakis2014power} show that for every $k\ge 3$, there do not exist
any deterministic anonymous SP mechanisms with a bounded approximation ratio for the {social-cost} objective on the line, even for simple instances with $k+1$ agents. It directly follows a corollary that there  {exists no} such mechanism with a bounded approximation ratio for the {maximum-cost} objective. Therefore, in our constrained setting with candidate locations, we cannot expect to beat such lower bounds when $k\ge 3$.

In this paper we are concerned with designing truthful deterministic mechanisms for the setting with candidates. It remains an open question to find randomized mechanisms matching the lower bound 2 in Theorems \ref{thm:lb} and \ref{thm:6}, though we have excluded the possibility of WPV mechanisms.

\section*{Acknowledge}
The authors thank Minming Li and three anonymous referees for their invaluable suggestions and comments.
Minming Li is supported by NNSF of China under Grant No. 11771365, and sponsored by Project No. CityU 11205619 from Research Grants Council of HKSAR.

\bibliography{reference}

\begin{thebibliography}{10}

\bibitem{alon2010strategyproof}
Noga Alon, Michal Feldman, Ariel~D Procaccia, and Moshe Tennenholtz.
\newblock Strategyproof approximation of the minimax on networks.
\newblock {\em Mathematics of Operations Research}, 35(3):513--526, 2010.

\bibitem{anshelevich2017randomized}
Elliot Anshelevich and John Postl.
\newblock Randomized social choice functions under metric preferences.
\newblock {\em Journal of Artificial Intelligence Research}, 58:797--827, 2017.

\bibitem{Aziz2019TheCC}
Haris Aziz, Hau Chan, Barton~E. Lee, and David~C. Parkes.
\newblock The capacity constrained facility location problem.
\newblock In {\em Proceeding of the 15th Conference on Web and Internet
  Economics (WINE)}, page 336, 2019.

\bibitem{border1983straightforward}
Kim~C Border and James~S Jordan.
\newblock Straightforward elections, unanimity and phantom voters.
\newblock {\em The Review of Economic Studies}, 50(1):153--170, 1983.

\bibitem{chen2018mechanism}
Xujin Chen, Xiaodong Hu, Xiaohua Jia, Minming Li, Zhongzheng Tang, and Chenhao
  Wang.
\newblock Mechanism design for two-opposite-facility location games with
  penalties on distance.
\newblock In {\em Proceedings of the 11th International Symposium on
  Algorithmic Game Theory (SAGT)}, pages 256--260, 2018.

\bibitem{chen2019truthful}
Xujin Chen, Minming Li, Changjun Wang, Chenhao Wang, and Yingchao Zhao.
\newblock Truthful mechanisms for location games of dual-role facilities.
\newblock In {\em Proceedings of the 18th International Conference on
  Autonomous Agents and MultiAgent Systems (AAMAS)}, pages 1470--1478, 2019.

\bibitem{dokow2012mechanism}
Elad Dokow, Michal Feldman, Reshef Meir, and Ilan Nehama.
\newblock Mechanism design on discrete lines and cycles.
\newblock In {\em Proceedings of the 13th ACM Conference on Electronic Commerce
  (ACM-EC)}, pages 423--440, 2012.

\bibitem{feldman2016voting}
Michal Feldman, Amos Fiat, and Iddan Golomb.
\newblock On voting and facility location.
\newblock In {\em Proceedings of the 17th ACM Conference on Economics and
  Computation (ACM-EC)}, pages 269--286, 2016.

\bibitem{fotakis2014power}
Dimitris Fotakis and Christos Tzamos.
\newblock On the power of deterministic mechanisms for facility location games.
\newblock {\em ACM Transactions on Economics and Computation (TEAC)},
  2(4):1--37, 2014.

\bibitem{kyropoulou2019mechanism}
Maria Kyropoulou, Carmine Ventre, and Xiaomeng Zhang.
\newblock Mechanism design for constrained heterogeneous facility location.
\newblock In {\em Proceedings of the 12th International Symposium on
  Algorithmic Game Theory (SAGT)}, pages 63--76, 2019.

\bibitem{libudgeted}
Minming Li, Chenhao Wang, and Mengqi Zhang.
\newblock Budgeted facility location games with strategic facilities.
\newblock In {\em Proceedings of the 29th International Joint Conference on
  Artificial Intelligence (IJCAI)}, pages 400--406, 2020.

\bibitem{lu2010asymptotically}
Pinyan Lu, Xiaorui Sun, Yajun Wang, and Zeyuan~Allen Zhu.
\newblock Asymptotically optimal strategy-proof mechanisms for two-facility
  games.
\newblock In {\em Proceedings of the 11th ACM Conference on Electronic Commerce
  (ACM-EC)}, pages 315--324, 2010.

\bibitem{lu2009tighter}
Pinyan Lu, Yajun Wang, and Yuan Zhou.
\newblock Tighter bounds for facility games.
\newblock In {\em Proceedings of the 5th International Workshop on Internet and
  Network Economics (WINE)}, pages 137--148, 2009.

\bibitem{meir2018strategic}
Reshef Meir.
\newblock Strategic voting.
\newblock {\em Synthesis Lectures on Artificial Intelligence and Machine
  Learning}, 13(1):1--167, 2018.

\bibitem{meir2012algorithms}
Reshef Meir, Ariel~D Procaccia, and Jeffrey~S Rosenschein.
\newblock Algorithms for strategyproof classification.
\newblock {\em Artificial Intelligence}, 186:123--156, 2012.

\bibitem{procaccia2009approximate}
Ariel~D. Procaccia and Moshe Tennenholtz.
\newblock Approximate mechanism design without money.
\newblock In {\em Proceedings of the 10th ACM Conference on Electronic Commerce
  (ACM-EC)}, pages 177--186, 2009.

\bibitem{procaccia2013approximate}
Ariel~D Procaccia and Moshe Tennenholtz.
\newblock Approximate mechanism design without money.
\newblock {\em ACM Transactions on Economics and Computation (TEAC)},
  1(4):1--26, 2013.

\bibitem{schummer2002strategy}
James Schummer and Rakesh~V Vohra.
\newblock Strategy-proof location on a network.
\newblock {\em Journal of Economic Theory}, 104(2):405--428, 2002.

\bibitem{serafino2015truthful}
Paolo Serafino and Carmine Ventre.
\newblock Truthful mechanisms without money for non-utilitarian heterogeneous
  facility location.
\newblock In {\em Proceedings of the 29th AAAI Conference on Artificial
  Intelligence (AAAI)}, pages 1029--1035, 2015.

\bibitem{sui2015approximately}
Xin Sui and Craig Boutilier.
\newblock Approximately strategy-proof mechanisms for (constrained) facility
  location.
\newblock In {\em Proceedings of the 14th International Conference on
  Autonomous Agents and Multiagent Systems (AAMAS)}, pages 605--613, 2015.

\bibitem{walsh2020strategy}
Toby Walsh.
\newblock Strategy proof mechanisms for facility location at limited locations.
\newblock {\em arXiv preprint arXiv:2009.07982}, 2020.

\bibitem{zou2015facility}
Shaokun Zou and Minming Li.
\newblock Facility location games with dual preference.
\newblock In {\em Proceedings of the 14th International Conference on
  Autonomous Agents and Multiagent Systems (AAMAS)}, pages 615--623, 2015.

\end{thebibliography}

\end{document}